\documentclass[submission,copyright,creativecommons]{eptcs}

\usepackage{color}
\usepackage{url}
\usepackage[latin1]{inputenc}
\usepackage{amsmath}
\usepackage{amscd}
\usepackage{latexsym}
\usepackage{stmaryrd}
\usepackage{amssymb}
\usepackage{amscd}
\usepackage{fancyvrb}
\usepackage{boxedminipage}

\ifpdf
\usepackage[pdftex]{graphicx}
\else
\usepackage{graphicx}
\fi
 
 \usepackage[all]{xy}

 \usepackage{cjr}



\def\setdef#1#2{\{#1|\;#2\}}             
\def\enset#1{\mathopen{ \{ }#1\mathclose{ \} }} 

\def\fuc#1{\mathsf{#1}}
\def\dimp{\mathbin{\Leftrightarrow}}
\def\imp{\mathbin{\Rightarrow}}

\def\e{\mathbin{\wedge}}

\def\ie{{\em i.e.\/}}

\def\eg{{\em e.g.\/}}
\def\pair#1{\const{#1}}
\def\const#1{\mathopen{\langle}#1\mathclose{\rangle}}
\newtheorem{definition}{Definition}
\newtheorem{lemma}{Lemma}
\newtheorem{theorem}{Theorem}
\newtheorem{corollary}{Corollary}
\newtheorem{example}{Example}

\newenvironment{proof}%
     {\vspace{1mm}\noindent \textbf{Proof.}}%
    {\par \nopagebreak \noindent \qed \vspace{3mm} }




%

\title{Refinement by interpretation in $\pi$-institutions}
\author{César J.~Rodrigues
\institute{Dep. Informatics \& CCTC,  \\ Minho University, Portugal}
\email{cjr@di.uminho.pt}
\and
Manuel A. Martins 
\institute{Dep. Mathematics, \\ Aveiro University,  Portugal} 
\email{martins@ua.pt} 
\and
Alexandre Madeira 
\institute{Dep. Informatics \& CCTC,  \\ Minho University, Portugal\\ Dep. Mathematics, \\ Aveiro University,  Portugal\\
Critical Software S.A., Portugal} 
\email{madeira@ua.pt} 
\and
Luís S.~Barbosa
\institute{Dep. Informatics \& CCTC,  \\ Minho University, Portugal}
\email{lsb@di.uminho.pt}
}

\begin{document}
\maketitle

\begin{abstract}
The paper discusses the role of interpretations, understood as multifunctions that
preserve and reflect logical consequence, as refinement witnesses in the
general setting of $\pi$-institutions. This leads to a smooth generalization of the
``refinement by interpretation" approach, recently introduced by the authors
 in more specific contexts. As a second, yet related contribution
  a basis is provided to build up a refinement  calculus of 
structured specifications in and across arbitrary $\pi$-institutions.
\end{abstract}

\section{Introduction}\label{sc:in}

The expression \emph{refinement by interpretation} was coined in \cite{MMB09a} to refer to 
an alternative approach to refinement
of equational specifications in which signature morphisms are replaced by 
 \emph{logical interpretations} as refinement witnesses.  
 
 Intuitively, an interpretation is a logic translation which preserves and reflects
 meaning. 
  Actually, it is a central tool in the study of
equivalent algebraic semantics (see, \eg,
\cite{Woji,BP89,BP,BR03,proto}),  a paradigmatic example being the interpretation of the
\emph{classical propositional calculus} into the \emph{equational theory of boolean algebras}
(cf. \cite[Example 4.1.2]{BP}).
 Interestingly enough, and in the more operational setting of formal software development, 
 the notion of interpretation proved effective to capture a number of transformations difficult to deal with in classical terms. 
 Examples include data encapsulation and the decomposition of operations into atomic transactions \cite{MMB09a}.
 
 A typical refinement pattern that is not easily captured by the classical approach concerns 
 refinement of a  subset of operations into operations defined over more specialized sorts.
 This kind of transformation induces the loss of the functional property on the operations' component of  signature morphisms. For example, there is not a  signature morphism $\sigma$ to guide a refinement where a specification with operations $g:s'\rightarrow s$ and $f:s'\rightarrow s$ is transformed into one with operations $g:s' \rightarrow s_{new}$ and $f:s'\rightarrow s$, since this translation naturally induces a map $\sigma_{sort}(s)=\{s,s_{new}\}$ which  violates the definition of signature morphism.

The approach seems also promising in the context of new, emerging computing paradigms which entail the 
  need for more flexible approaches to what is taken as a valid transformation of specifications, as in, for example, \cite{batory}.
Later, in \cite{MMB09b}, the whole framework  was generalized from the original equational setting to 
address deductive systems of arbitrary dimension. This made possible, for
example, to refine sentential into equational specifications and the latter into modal  ones.
Moreover, the restriction to logics with finite consequence relations was dropped which resulted in
increased flexibility along the software development process. The interested reader is referred to both papers for a
number of illustrative examples.

On the other hand, the notion of an institution \cite{instituicoes}, proposed by J. Goguen and R. Burstall in the late 1970s, 
has proven very successful in formalizing logical systems and their interrelations.

This paper aims at lifting the use of logic interpretations to witness refinement of specifications at an institutional
level. This is made in the context of $\pi$-institutions \cite{FS88} which deal directly with
syntactic consequence relations rather than with semantical satisfaction, as in the original
definition of an institution \cite{instituicoes}.  $\pi$-institutions are particularly useful in 
formalizing deductive systems with varying signatures, which
are only indirectly handled by the methods of abstract algebraic logic, as in \cite{BP} on which our first 
generalization \cite{MMB09b} is based. In general, $\pi$-institutions
provide a more operational framework with no loss of 
expressiveness as any classical institution can be suitably translated.

Refinement by interpretation is proposed here at two different levels: a \emph{macro} level relating
different $\pi$-institutions, and the \emph{micro} level of specifications inside a particular, although arbitrary, $\pi$-institution.
The former discusses what is an interpretation of  institutions and provides the envisaged generalization of this 
approach to refinement of arbitrary deductive systems. The latter, on the other hand, corresponds to a sort of
\emph{local} refinement witnessed by interpretations thought simply as multifunctions relating sentences generated
by different signatures within the same institution.

As a second, although related, contribution, the paper
lays the basis for a re\-fi\-nement-by-interpretation calculus of structured specifications in an arbitrary  (and across)
$\pi$-institution(s).
That both levels can be addressed and related to each other  comes to no surprise:
 a main outcome of institution theory is precisely to provide
what   \cite{AN94} describes as
\emph{
effective mechanisms to manipulate theories in
an ana\-lo\-gous way as our deductive calculi manipulate formulas}.

The remainder of this paper is organized as follows. $\pi$-institutions and a notion of interpretation between them
are reviewed in section \ref{sc:pi}. Then, section  \ref{sc:rbi} characterizes refinement by interpretation in this context,
whereas the local view is discussed in section \ref{sc:inside}. The structure of a refinement calculus is discussed in
 section \ref{sc:calc}. Section \ref{sc:conc} concludes and
highlights some pointers to related work.

\section{$\pi$-institutions and interpretations}\label{sc:pi}

In broad terms, an institution consists of an arbitrary category $Sign$ of signatures together with two functors 
$\fuc{SEN}$ and $\fuc{MOD}$ that give, respectively, for each signature, a set of sentences and a category of models.
For each signature, sentences and models are related via a satisfaction relation whose  main axiom formalizes the 
popular aphorism \emph{truth is invariant under change of notation} \cite{livrodiaconescu}. Such a very generic way to 
capture a logical system was originally motivated by quite pragmatic concerns: to provide 
an abstract, language-independent framework for specificifying and reasoning about software systems, in response to the explosion
of specification logics. Several current specification formalisms, notably, \textsc{CafeOBJ} \cite{DF02}, \textsc{Casl} \cite{MHST03}
and \textsc{Hets} \cite{tacas} were designed to take advantage of such a general framework.

$\pi$-institutions, proposed by J. Fiadeiro and A. Sernadas in \cite{FS88}, fulfill a si\-mi\-lar role, replacing  semantical satisfaction
by  a syntactic consequence relation \emph{à la} Tarski. Therefore,  a $\pi$-institution introduces, for each 
signature, a closure operator on the set of its sentences capturing logical consequence. As remarked by
G. Voutsadakis in \cite{Vo03} $\pi$-institutions \emph{may be viewed as the natural generalization of the notion of a deductive 
system on which a categorical theory of algebraizability, generalizing the theory of \cite{BP} may be based}. In the sequel
we review the basic definition and adopt  Voutsadakis's notion of interpretation to define refinement by interpretation in such
a general setting.

\begin{definition}
A $\pi$-institution $I$ is a tuple
$\pair{Sign,\fuc{SEN},(C_{\Sigma})_{\Sigma \in |Sign|}}$ where
\begin{itemize}
\item $Sign$ is a  category of signatures and signature morphisms;
\item $\fuc{SEN}: Sign\rightarrow Set$ is a functor  from the category of
signatures to the category of small sets giving, for each
$\Sigma \in |Sign|$, the set $\fuc{SEN}(\Sigma)$ of $\Sigma$-sentences
 and mapping each $f:\Sigma_{1}\rightarrow \Sigma_{2}$
to a {\em substitution} $\fuc{SEN}(f): \fuc{SEN}(\Sigma_{1})\rightarrow \fuc{SEN}(\Sigma_{2})$;
\item for each $\Sigma \in |Sign|$, $C_{\Sigma}: \pow{(\fuc{SEN}({\Sigma}))}\rightarrow
\pow{(\fuc{SEN}({\Sigma}))}$ is a mapping, called
$\Sigma$-{\em closure}, such that, for all $A,B\subseteq \fuc{SEN}(\Sigma)$ and $\Sigma_1, \Sigma_2 \in Sign$;
\begin{description}
\item[(a)] $A\subseteq C_{\Sigma}(A)$
\item[(b)] $C_{\Sigma}(C_{\Sigma}(A))= C_{\Sigma}(A)$
\item[(c)] $C_{\Sigma}(A)\subseteq  C_{\Sigma}(B)$
for  $A\subseteq B$
\item[(d)] $\fuc{SEN}(f)(C_{\Sigma_{1}}(A)) 
\subseteq
C_{\Sigma_{2}}(\fuc{SEN}(f)(A))$
\end{description}
\end{itemize}
\end{definition}
Note that the $\Sigma$-closure operator of a $\pi$-institution is not required to be finitary.
 
 \begin{definition}
 A $\pi$-institution
$I^{\prime} = \pair{Sign^{\prime},\fuc{SEN}^{\prime},(C^{\prime}_{\Sigma})_{\Sigma \in |Sign^{\prime}|}}$ is a 
\emph{sub-$\pi$-institution} of 
$I = \pair{Sign,\fuc{SEN}, \\ (C_{\Sigma})_{\Sigma \in |Sign|}}$ if $Sign'$ is a sub-category of $Sign$ and,
for each $\Sigma \in |Sign'|$, $\fuc{SEN'}(\Sigma) \subseteq \fuc{SEN}(\Sigma)$
and the $\Sigma$-closure $C'_{\Sigma}$ is the restriction of $C_{\Sigma}$.
\end{definition}

 Roughly speaking, the notion of logical interpretation underlying \cite{MMB09b} is that of \cite{BP89}: a multifunction
 (i.e., a set-valued function) relating formulas which preserves and reflects logical consequence. 
 Note that the expressive flexibility of interpretations comes precisely from their definition as  multifunctions.
  A corresponding definition, to be used in the sequel, was proposed, in the context of $\pi$-institutions,
  in \cite{Vo03}:
  
\begin{definition}
  Given two $\pi$-institutions  $I  = \pair{Sign,\fuc{SEN},(C_{\Sigma})_{\Sigma \in |Sign|}}$ and 
  $I^{\prime}  = \pair{Sign^{\prime},\fuc{SEN}^{\prime},(C^{\prime}_{\Sigma})_{\Sigma \in |Sign^{\prime}|}}$, a 
{\em translation} $\pair{\f,\alpha}:I\rightarrow I^{\prime}$ consists
of a functor $\f:Sign\rightarrow Sign^{\prime}$ together with a
natural transformation  $\alpha: \fuc{SEN}\rightarrow \pow{~}\fuc{SEN}^{\prime} \f$.

A translation 
$\pair{\f,\alpha}:I\rightarrow I^{\prime}$ is a {\em semi-interpretation}
if, for all $\Sigma \in |Sign|$, $\Phi\cup \enset{\phi}\subseteq 
\fuc{SEN}(\Sigma)$,
\begin{eqnarray}
\phi \in C_{\Sigma}(\Phi) &\; \; \imp\; \; &
\alpha_{\Sigma}(\phi)\subseteq C^{\prime}_{\f(\Sigma)}(\alpha_{\Sigma}(\Phi))
\label{semii}
\end{eqnarray}
It is an {\em interpretation} if, 
\begin{eqnarray}
\phi \in C_{\Sigma}(\Phi) & \; \; \dimp\; \; &
\alpha_{\Sigma}(\phi)\subseteq C^{\prime}_{\f(\Sigma)}(\alpha_{\Sigma}(\Phi))
\end{eqnarray}
Finally, we say that a translation $\pair{\f,\alpha}$ interprets a $\pi$-institution $I$, if there is 
a $\pi$-institution $I^0 = \pair{Sign^0,\fuc{SEN}^0,(C^0_{\Sigma})_{\Sigma \in |Sign^0|}}$ for which 
$\pair{\f,\alpha}$ is an interpretation.
\end{definition}

Note that a translation depends only on the categories of signatures and 
the sentence functors involved, but  not on the family of closure operators.
A translation is a {\em self-translation} if  $\f$ is the identity functor $Id$.
On the other hand, it is said to be a {\em functional translation} if, for every 
$\Sigma \in |Sign|$, $\phi \in \fuc{SEN}(\Sigma)$,
$|\alpha_{\Sigma}(\phi)|=1$. Additionally, it is  an
{\em identity translation},  if for every
$\Sigma \in |Sign|$, $\phi \in \fuc{SEN}(\Sigma)$,
\begin{eqnarray}
\alpha_{\Sigma}(\phi)=\enset{\phi} \label{sitdef} 
\end{eqnarray}

 \section{Refining $\pi$-institutions by interpretation}\label{sc:rbi}

In software development the process of \emph{stepwise refinement} \cite{ST88x} 
encompasses a chain of successive transformations of a specification
\[S_{0}\leadsto S_{1}\leadsto S_{2}\leadsto \cdots
\leadsto S_{n-1}\leadsto S_{n} \]
through which
 a complex design is produced by incrementally adding details and reducing under-spe\-ci\-fi\-ca\-tion.
This is done step-by-step until the class of models becomes restricted to such an extent that a
program can be easily manufactured.
The discussion on what counts for a valid refinement step, represented by 
$S_{i}\leadsto S_{j}$, is precisely the starting point of this line of research.

The minimal requirement to be placed on a refinement relation, besides being a pre-order to allow stepwise construction, is 
preservation of logical consequence. In the framework of $\pi$-institutions this corresponds to the following definition:

\begin{definition}[Syntactic refinement]
Let   $I  = \pair{Sign,\fuc{SEN},(C_{\Sigma})_{\Sigma \in |Sign|}}$ and 
  $I^{\prime}  = \pair{Sign^{\prime},\fuc{SEN}^{\prime},(C^{\prime}_{\Sigma})_{\Sigma \in |Sign^{\prime}|}}$
be two $\pi$-institutions. $I^{\prime}$ 
is  a syntactic refinement of $I$ if $Sign$ is a sub-category of $Sign^{\prime}$ and,
for each $\Sigma \in |Sign|$,
 $\fuc{SEN}(\Sigma) \subseteq \fuc{SEN}^{\prime}(\Sigma)$ and
$C_{\Sigma}(\Phi)\subseteq C^{\prime}_{\Sigma}(\Phi)$ 
for $\Phi\subseteq \fuc{SEN'}(\Sigma)$.
\end{definition}
Clearly, a  $\pi$-institution is  a syntactic refinement of any of its $\pi$-sub-institutions.
Refinement by interpretation, on the other hand, goes a step further:

\begin{definition}[Refinement by interpretation]\label{df:rbi}
Consider two $\pi$-institutions
 $I  = \pair{Sign,\fuc{SEN},(C_{\Sigma})_{\Sigma \in |Sign|}}$ and 
  $I^{\prime}  = \pair{Sign^{\prime},\fuc{SEN}^{\prime},(C^{\prime}_{\Sigma})_{\Sigma \in |Sign^{\prime}|}}$ and let
 $\fdec{\pair{\f,\alpha}}{I}{I'}$  be a translation.
$I'$  is a \emph{refinement by interpretation} of $I$ via $\pair{\f,\alpha}$, written as $I \leadsto_{\pair{\f,\alpha}} I'$, if
\begin{itemize}
\item there is a $\pi$-institution $I^0 = \pair{Sign^{\prime}, \fuc{SEN}^{\prime}, (C^0_{\Sigma})_{\Sigma \in Sign^{\prime}}}$
that interprets $I$ under translation $\pair{\f,\alpha}$;
\item for all 
$\Sigma \in |Sign|$, $\Phi\subseteq \fuc{SEN}(\Sigma)$,
  \[
\phi \in C_{\Sigma}(\Phi)\; \; \imp\; \;
\alpha_{\Sigma}(\phi)\subseteq C^{\prime}_{\f(\Sigma)}(\alpha_{\Sigma}(\Phi))
\]
\end{itemize}
\end{definition}
Clearly, a syntactic refinement is a refinement by interpretation for a self, identity, functional interpretation,
with $\f = Id$.
The following Lemma establishes  an useful characterization of refinement via interpretation:
\begin{lemma}
Let $I  = \pair{Sign,\fuc{SEN},(C_{\Sigma})_{\Sigma \in |Sign|}}$ and 
  $I^{\prime}  = \pair{Sign^{\prime},\fuc{SEN}^{\prime},(C_{\Sigma})_{\Sigma \in |Sign^{\prime}|}}$
 be two $\pi$-institutions  and $\fdec{\pair{\f,\alpha}}{I}{I'}$  a translation. 
 Then, $I \leadsto_{\pair{\f,\alpha}} I'$ if $I'$ is a syntactic refinement of some interpretation of $I$ through  $\langle \f,\alpha\rangle$. 
\end{lemma}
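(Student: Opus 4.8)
The plan is to unfold Definition~\ref{df:rbi} together with the definition of syntactic refinement, and then simply chain the two conditions that the hypothesis already hands us. So assume $I'$ is a syntactic refinement of some $\pi$-institution $I^0 = \pair{Sign',\fuc{SEN}',(C^0_{\Sigma})_{\Sigma \in |Sign'|}}$ for which $\pair{\f,\alpha}$ is an interpretation of $I$; explicitly, for all $\Sigma \in |Sign|$ and $\Phi \cup \enset{\phi} \subseteq \fuc{SEN}(\Sigma)$ we have $\phi \in C_{\Sigma}(\Phi) \dimp \alpha_{\Sigma}(\phi)\subseteq C^0_{\f(\Sigma)}(\alpha_{\Sigma}(\Phi))$, while the refinement gives $C^0_{\Sigma}(\Psi) \subseteq C'_{\Sigma}(\Psi)$ for every $\Sigma \in |Sign'|$ and $\Psi \subseteq \fuc{SEN}'(\Sigma)$. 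To conclude $I \leadsto_{\pair{\f,\alpha}} I'$ I must produce two things: a $\pi$-institution over $Sign'$ and $\fuc{SEN}'$ that interprets $I$ under $\pair{\f,\alpha}$, and the one-directional preservation $\phi \in C_{\Sigma}(\Phi) \imp \alpha_{\Sigma}(\phi)\subseteq C'_{\f(\Sigma)}(\alpha_{\Sigma}(\Phi))$.

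The first obligation is discharged verbatim: the witness $I^0$ supplied by the hypothesis is by construction a $\pi$-institution for which $\pair{\f,\alpha}$ is an interpretation of $I$, so it is exactly the $\pi$-institution required by the first clause of Definition~\ref{df:rbi}. For the second I argue pointwise. Fix $\Sigma \in |Sign|$, $\Phi \subseteq \fuc{SEN}(\Sigma)$, and take any $\phi \in C_{\Sigma}(\Phi)$. The forward implication of the interpretation equivalence yields $\alpha_{\Sigma}(\phi)\subseteq C^0_{\f(\Sigma)}(\alpha_{\Sigma}(\Phi))$. Since $\f(\Sigma) \in |Sign'|$ and, by the typing of the natural transformation $\alpha$, $\alpha_{\Sigma}(\Phi) \subseteq \fuc{SEN}'(\f(\Sigma))$, the refinement inclusion applies with $\Psi = \alpha_{\Sigma}(\Phi)$, giving $C^0_{\f(\Sigma)}(\alpha_{\Sigma}(\Phi)) \subseteq C'_{\f(\Sigma)}(\alpha_{\Sigma}(\Phi))$. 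Composing the two inclusions delivers $\alpha_{\Sigma}(\phi)\subseteq C'_{\f(\Sigma)}(\alpha_{\Sigma}(\Phi))$, which is the required implication.

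There is no deep obstacle in this argument: it is a two-link chain, interpretation of $I$ into $I^0$ followed by the refinement of $I^0$ into $I'$. The only point demanding attention is the bookkeeping about carriers. One must check that $I^0$ can be taken to share the signature category $Sign'$ and the sentence functor $\fuc{SEN}'$ with $I'$ --- so that the syntactic refinement relates institutions over the same signatures and sentences, collapsing to the pointwise inclusion $C^0 \subseteq C'$ --- and that the inclusion is invoked at the image signature $\f(\Sigma)$ rather than at $\Sigma$, with $\alpha_{\Sigma}(\Phi)$ genuinely living in $\fuc{SEN}'(\f(\Sigma))$. Once these typings are confirmed the implication follows with no further calculation, so the whole content of the lemma lies in correctly composing the interpretation step with the refinement step.
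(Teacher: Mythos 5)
Your proof is correct and follows essentially the same route as the paper's own: discharge the first clause of Definition~\ref{df:rbi} with the witness $I^0$ from the hypothesis, then chain the forward direction of the interpretation equivalence with the inclusion $C^0_{\f(\Sigma)}(\alpha_{\Sigma}(\Phi)) \subseteq C'_{\f(\Sigma)}(\alpha_{\Sigma}(\Phi))$ supplied by syntactic refinement. Your extra care about the carriers of $I^0$ and the typing of $\alpha_{\Sigma}(\Phi)$ at $\f(\Sigma)$ is sound bookkeeping that the paper leaves tacit.
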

\begin{proof}
Suppose $I^{\prime}$ is a syntactic refinement of an arbitrary interpretation $I^0$ of $I$ along $\pair{\f,\alpha}$. 
Clearly the first condition in the definition of refinement by interpretation is met. For the second, let
$\Sigma \in Sign$ and $\Phi \cup \enset{\phi} \subseteq SEN(\Sigma)$.
Assume $\phi \in C_\Sigma(\Phi)$. Then
\begin{equation*}
\alpha_{\Sigma}(\phi)\subseteq C^0_{F(\Sigma)}(\alpha_{\Sigma}(\Phi))
\end{equation*}
because $\pair{\f,\alpha}$ is an interpretation. On the other hand, $I^{\prime}$ being a 
syntactic refinement of $I^0$,
\begin{equation*}
C^0_{F(\Sigma)}(\alpha_{\Sigma}(\Phi)) \subseteq C'_{F(\Sigma)}(\alpha_{\Sigma}(\Phi))
\end{equation*}
Thus,
$\alpha_{\Sigma}(\phi)\subseteq C'_{F(\Sigma)}(\alpha_{\Sigma}(\Phi))$.
%
\end{proof}

Definition \ref{df:rbi} subsumes the corresponding notion introduced in \cite{MMB09b} for
$k$-dimensional deductive systems, because 
every $k$-dimensional deductive system  $\pair{\cal L, \vdash_{\cal L}}$ 
over a countable set of variables $V$, gives rise to a specific $\pi$-institution
$I_{\cal L}=\pair{Sign_{\cal L},Sen_{\cal L},
            (C_{{\cal L}_{\Sigma}})_{\Sigma \in |Sign_{\cal L}|}}$, built in  \cite{Vo02} as follows:
\begin{description}
\item[(i)] $Sign_{\cal L}$ is the one-object category with object $V$. The
           identity morphism is the inclusion 
            $i_{V}:V\rightarrow Fm_{\cal L}(V)$, where $Fm_{\cal L}(V)$ 
            denotes the set of formulas constructed by recursion using variables in $V$ and connectives in
            $\cal L$ in the usual way. Composition $g\comp f$
            is defined by
            $g\comp f= g^{\star}f$, where
            $g^{\star}: Fm_{\cal L}(V)\rightarrow Fm_{\cal L}(V)$
            denotes the substitution uniquely extending  $g$ to $Fm_{\cal L}(V)$.
\item[(ii)] $\fuc{SEN}_{\cal L}:Sign_{\cal S}\rightarrow Set$ maps V
            to $Fm_{\cal L}^{k}(V)$ and $f:V\rightarrow V$ to $Fm_{\cal L}(V)$
            $(f^{\star})^{k}: Fm_{\cal L}^{k}(V)
            \rightarrow Fm_{\cal L}^{k}(V)$.
             It is easy to see that $\fuc{SEN}_{\cal S}$ is indeed a functor.
\item[(iii)] Finally, $C_{\cal L}$     is the standard closure
              operator $C_{V}: \pow(Fm_{\cal L}(V))\rightarrow 
             \pow(Fm_{\cal L}(V))$   associated with $\pair{\cal L, \vdash_{\cal L}}$, i.e.,
             $C_{V}(\Phi)=\enset{\phi \in Fm^{k}_{\cal L}(V): 
             \Phi\vdash_{\cal S}\phi}$ for all $\Phi\subseteq 
              Fm^{k}_{\cal L}(V)$.
          \end{description}

\begin{example}
The $\pi$-institution of modal logic $S5^{G}$ forms a (syntactic) refinement of the one for 
classical propositional calculus ($CPC$). 
Actually, consider
the modal signature
$\Sigma=\enset{\rightarrow,\wedge,\vee,\neg,\top,\bot,$ $\Box}$.
Modal logic $K$ is defined as an extension of $CPC$ by adding the 
axiom $\Box(p\rightarrow q)\rightarrow (\Box p\rightarrow \Box q)$
and the inference rule $\frac{p}{\Box p}$. Logic $S5^{G}$, on
the other hand, enriches the signature of $K$ with the symbol $\Diamond$,
and $K$ itself with the axioms $\Box p\rightarrow p$,
$\Box p\rightarrow \Box\Box p$ and
$\Diamond p\rightarrow\Box\Diamond p$, cf.\ \cite{BP}. Hence,
since the  signature of both systems contains the signature of
$CPC$ and their presentations extend that of $CPC$ with extra
axioms and inference rules, we have
 $CPC\leadsto K$ and 
$CPC\leadsto S5^{G}$ (actually, $CPC \leadsto K \leadsto S5^{G}$).
Hence, through these refinements, one may capture 
more complex, modally expressed requirements introduced along the refinement process.
\end{example}

\noindent


Given an interpretation $\fdec{\tau}{Fm_{\cal L}(V)}{\pow{(Fm_{\cal L'}(V'))}}$ between two deductive systems 
$\pair{\cal L, \vdash_{\cal L}}$  and $\pair{\cal L', \vdash_{\cal L'}}$, let us define $\pair{\f_\tau,\tau}$ as the translation between $\pi$-institu\-tions
$I_{\cal L}$ and $I_{\cal L'}$, where $\f_\tau$ is a functor between single object categories, mapping, at the object level,
$V$ to $V'$. As expected, 
\begin{lemma}
An $l$-deductive system $\pair{\cal L', \vdash_{\cal L'}}$ is an interpretation of a
$k$-deductive system $\pair{\cal L, \vdash_{\cal L}}$ through an interpretation $\tau$, iff $\langle F_\tau, \tau\rangle$ interprets the
$\pi$-institution $I_{\cal L}$ in $I_{\cal L'}$.
\end{lemma}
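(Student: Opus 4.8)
The plan is to prove both implications at once by unwinding the two notions of interpretation and observing that, thanks to the explicit construction of $I_{\cal L}$ recorded in items (i)--(iii), each side reduces to one and the same statement about the consequence relations $\vdash_{\cal L}$ and $\vdash_{\cal L'}$. Throughout I read ``$\langle F_\tau,\tau\rangle$ interprets $I_{\cal L}$ in $I_{\cal L'}$'' as: $\langle F_\tau,\tau\rangle\colon I_{\cal L}\to I_{\cal L'}$ is an interpretation of $\pi$-institutions in the sense defined above, the witnessing $\pi$-institution being $I_{\cal L'}$ itself. I write $V$ for the unique object of $Sign_{\cal L}$ and $V'=F_\tau(V)$ for that of $Sign_{\cal L'}$, so that $\tau$ has a single component $\tau_V\colon Fm^{k}_{\cal L}(V)\to\pow{(Fm^{l}_{\cal L'}(V'))}$; following the convention already used in the interpretation Definition, for $\Phi\subseteq Fm^{k}_{\cal L}(V)$ the expression $\tau_V(\Phi)$ abbreviates $\bigcup_{\eta\in\Phi}\tau_V(\eta)$, which I also denote $\tau[\Phi]$.

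First I would confirm that $\langle F_\tau,\tau\rangle$ is indeed a translation, i.e.\ that $\tau$ is a natural transformation $\fuc{SEN}_{\cal L}\to\pow{~}\fuc{SEN}_{\cal L'}F_\tau$. As both signature categories have a single object, naturality collapses to one equation per substitution $f\colon V\to V$, each asserting that $\tau_V$ commutes with the substitution; this is exactly the structurality of $\tau$ as a translation of deductive systems, so it is inherited from the hypothesis. This step is pure bookkeeping but deserves to be noted, since the $\pi$-institutional side presupposes it.

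The core step is the translation of the two closure conditions. By item (iii) applied to $\cal L$, we have $\phi\in C_V(\Phi)$ iff $\Phi\vdash_{\cal L}\phi$; applying the same item to $\cal L'$ gives $\psi\in C'_{V'}(\tau[\Phi])$ iff $\tau[\Phi]\vdash_{\cal L'}\psi$, whence $\tau_V(\phi)\subseteq C'_{V'}(\tau_V(\Phi))$ holds iff $\tau[\Phi]\vdash_{\cal L'}\psi$ for every $\psi\in\tau_V(\phi)$, that is, iff $\tau[\Phi]\vdash_{\cal L'}\tau(\phi)$. Feeding these two equivalences into the defining biconditional of a $\pi$-institutional interpretation,
\[
\phi\in C_V(\Phi)\;\dimp\;\tau_V(\phi)\subseteq C'_{V'}(\tau_V(\Phi)),
\]
turns it into
\[
\Phi\vdash_{\cal L}\phi\;\dimp\;\tau[\Phi]\vdash_{\cal L'}\tau(\phi),
\]
which is precisely the condition that $\tau$ witness ``$\cal L'$ is an interpretation of $\cal L$''. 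Reading the resulting equivalence in one direction yields the ($\Rightarrow$) part of the Lemma and in the other the ($\Leftarrow$) part; moreover the two separate implications line up, the preservation half being the semi-interpretation condition (\ref{semii}) and the reflection half its converse.

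The one point deserving care --- the main obstacle, though more a pitfall than a difficulty --- is keeping the codomain honest in the ($\Leftarrow$) direction: it genuinely requires the witnessing $\pi$-institution to be $I_{\cal L'}$ (equivalently $C^0 = C'_{\cal L'}$), for otherwise the purely existential reading of ``interprets'' would only deliver the biconditional against some extraneous closure $C^0$ rather than against $\vdash_{\cal L'}$ itself. This is the force of the phrase ``in $I_{\cal L'}$'', and I would isolate it at the outset. The remaining obligations are the routine identifications $\tau_V(\Phi)=\tau[\Phi]$ and the unfolding of a set inclusion $\tau(\phi)\subseteq C'_{V'}(\cdots)$ as a family of $\vdash_{\cal L'}$-consequences.
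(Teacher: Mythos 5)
Your proof is correct and follows essentially the same route as the paper's: both arguments reduce the $\pi$-institutional biconditional to the deductive-system one by unwinding item (iii) of the construction of $I_{\cal L}$, so that $\phi \in C_{V}(\Gamma) \dimp \tau(\phi) \subseteq C_{V'}(\tau(\Gamma))$ becomes $\Gamma \vdash_{\cal L} \phi \dimp \tau(\Gamma) \vdash_{\cal L'} \tau(\phi)$. Your additional checks (naturality of $\tau$ as inherited from structurality, and pinning the witnessing $\pi$-institution to $I_{\cal L'}$ itself) are sound elaborations of details the paper's terser proof leaves implicit.
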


\begin{proof}
Assume $\pair{\cal L, \vdash_{\cal L}}$ (respectively, $\pair{\cal L', \vdash_{\cal L'}}$) are defined
over a countable set of variables $V$ (respectively, $V'$).
Being an interpretation between deductive systems, $\tau$ is a multifunction
$\fdec{\tau}{Fm_{\cal L}(V)}{\pow{(Fm_{\cal L'}(V'))}}$ such that, for all
$\Gamma\cup\enset{\phi}\subseteq Fm_{\cal L}(V)$,

\begin{equation}\label{ri1}
\Gamma \vdash_{\cal L} \phi\; \; \Leftrightarrow\; \; \tau(\Gamma) \vdash_{\cal L'} \tau(\phi)
\end{equation}
According to the construction of $I_{\cal L}$, detailed above, this is equivalent to
\begin{equation}\label{ri2}
\phi \in C_{V}(\Gamma) \; \; \Leftrightarrow\; \; \tau(\phi) \subseteq  C_{V'}(\tau(\Gamma))
\end{equation}
\end{proof}

\noindent
Hence, it is immediate to check that 
\begin{corollary}
An $l$-deductive system $\pair{\cal L', \vdash_{\cal L'}}$ is a refinement of a 
$k$-deductive system $\pair{\cal L, \vdash_{\cal L}}$ through an interpretation $\tau$, iff the
$\pi$-institution $I_{\cal L'}$  is a refinement of $I_{\cal L}$ through $\langle F_\tau, \tau\rangle$.
\end{corollary}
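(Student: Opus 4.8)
The plan is to unfold both sides of the biconditional into their defining conditions and then match them, reducing each match to a result already available. Reading Definition~\ref{df:rbi} with $I = I_{\cal L}$, $I' = I_{\cal L'}$ and translation $\pair{\f_\tau,\tau}$, the assertion ``$I_{\cal L'}$ is a refinement of $I_{\cal L}$ through $\pair{\f_\tau,\tau}$'' is the conjunction of: (1)~there is a $\pi$-institution $I^0$ with signature category $Sign_{\cal L'}$, sentence functor $\fuc{SEN}_{\cal L'}$ and some closure family $(C^0_{V'})$, which interprets $I_{\cal L}$ under $\pair{\f_\tau,\tau}$; and (2)~for all $\Phi\cup\enset{\phi}\subseteq Fm^k_{\cal L}(V)$, $\phi\in C_V(\Phi)$ implies $\tau(\phi)\subseteq C_{V'}(\tau(\Phi))$, where $C_{V'}$ is the closure of $I_{\cal L'}$. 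Correspondingly, the analogous notion of refinement between deductive systems from \cite{MMB09b}, ``$\pair{\cal L',\vdash_{\cal L'}}$ is a refinement of $\pair{\cal L,\vdash_{\cal L}}$ through $\tau$'', unfolds into: (1$'$)~there is a deductive system $\pair{\cal L^0,\vdash^0}$ over the language $Fm_{\cal L'}(V')$ for which $\tau$ is an interpretation of $\pair{\cal L,\vdash_{\cal L}}$; and (2$'$)~$\Gamma\vdash_{\cal L}\phi$ implies $\tau(\Gamma)\vdash_{\cal L'}\tau(\phi)$.

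I would then prove the two matchings separately. For the existence clauses, the decisive observation is that the construction (i)--(iii) sets up a correspondence between deductive systems over the fixed language $Fm_{\cal L'}(V')$ and $\pi$-institutions whose signature category is the single-object category on $V'$ and whose sentence functor sends $V'$ to $Fm^l_{\cal L'}(V')$ --- precisely the shape imposed on $I^0$ in clause~(1). Hence a witness $\pair{\cal L^0,\vdash^0}$ for (1$'$) yields a witness $I_{\cal L^0}$ for (1) and conversely, and for these matched witnesses the equivalence ``$\tau$ interprets $\pair{\cal L,\vdash_{\cal L}}$ into $\pair{\cal L^0,\vdash^0}$'' $\Leftrightarrow$ ``$\pair{\f_\tau,\tau}$ interprets $I_{\cal L}$ in $I_{\cal L^0}$'' is exactly the preceding Lemma. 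This gives (1)~$\Leftrightarrow$~(1$'$).

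For the preservation clauses, I would appeal to item~(iii) of the construction, which identifies syntactic consequence with closure membership: $\phi\in C_V(\Phi)\Leftrightarrow\Phi\vdash_{\cal L}\phi$, and, applied to $\cal L'$, $\tau(\phi)\subseteq C_{V'}(\tau(\Phi))\Leftrightarrow\tau(\Phi)\vdash_{\cal L'}\tau(\phi)$. Substituting these two equivalences turns clause~(2) into clause~(2$'$) verbatim; this is the very passage from \eqref{ri1} to \eqref{ri2} used in the previous proof, here in its one-directional form. Conjoining the two matchings yields the required biconditional.

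The work here is entirely bookkeeping rather than computation, and the only point demanding care is the shape constraint on the witnessing $\pi$-institution: Definition~\ref{df:rbi} forces $I^0$ to share the signature category and sentence functor of $I_{\cal L'}$, and one must check that this corresponds exactly to the witnessing system $\pair{\cal L^0,\vdash^0}$ living over the same language $Fm_{\cal L'}(V')$ as $\cal L'$. Once this correspondence between witnesses is pinned down, no argument beyond the preceding Lemma and the standard closure-operator construction is needed, which is why the result is indeed ``immediate to check''.
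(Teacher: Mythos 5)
Your proposal is correct and follows exactly the route the paper intends: the paper offers no explicit proof, merely noting the corollary is ``immediate to check'' from the preceding Lemma, and your clause-by-clause unfolding --- matching the witnessing deductive system $\pair{\cal L^0,\vdash^0}$ with the witnessing $\pi$-institution $I^0$ via the construction (i)--(iii) and invoking the Lemma for the interpretation clause, then identifying $\vdash$ with closure membership for the preservation clause --- is precisely the bookkeeping that remark elides. Your added care about the shape constraint on $I^0$ (same signature category and sentence functor as $I_{\cal L'}$) is a worthwhile explicit check, but it does not constitute a different approach.
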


As  a final remark, note that, in a very precise sense,  Definition \ref{df:rbi} also covers the case of classical institutions. 
Actually, a  $\pi$-institution corresponding to a classical one can always be defined: for each 
signature $\Sigma$ and set of formulas $\Psi$, take $C_{\Sigma}(\Psi)$ as the set of sentences  satisfied  in all models
validating $\Psi$.

\section{The local view}\label{sc:inside}
Having discussed refinement  by interpretation of $\pi$-institutions, we address now the same sort of 
refinement applied to  specifications inside an arbitrary $\pi$-institution. Such is the \emph{local} view.
Given an arbitrary $\pi$-institution $I = \pair{Sign,\fuc{SEN},(C_{\Sigma})_{\Sigma \in |Sign|}}$,
 a basic, or \emph{flat} specification is defined as 
\begin{equation*}
SP \; =\; \pair{\Sigma, \Phi}
\end{equation*}
where $\Sigma \in |Sign|$ and $\Phi \subseteq \fuc{SEN}(\Sigma)$.
Its  meaning  is the closure of $\Phi$, i.e., $C_{\Sigma}(\Phi)$. 
D. Sannella and A. Tarlecki in \cite{ST88}
 define specification over an arbitrary institution along similar lines, but taking, as semantic domain, classes of models
 instead of logical consequence relations.

As expected,  any morphism $\fdec{\sigma}{\Sigma}{\Sigma'}$ in $Sign$ 
 entails a notion of \emph{local} refinement $\leadsto_{\sigma}$ in $I$ given by
 
 \begin{equation}
 \pair{\Sigma, \Phi} \leadsto_{\sigma} \pair{\Sigma', \Phi'}\; \; \text{if}\; \; \sigma(\Phi) \subseteq C_{\Sigma'} (\Phi')
\end{equation}
For $\sigma$ an inclusion, this may be regarded as a form of syntactic refinement.

Specifications may also be connected by interpretations which, again, corres\-pond to multifunctions
preserving and reflecting consequence. Formally,


\begin{definition}\label{df:specref} Let $\pair{\Sigma, \Phi}$ and $\pair{\Sigma', \Phi'}$ be two specifications over a $\pi$-institution 
$I = \pair{Sign,\fuc{SEN},(C_{\Sigma})_{\Sigma \in |Sign|}}$ and 
	$\fdec{i}{\fuc{SEN}(\Sigma)}{\pow{(\fuc{SEN}(\Sigma'))}}$  a multifunction from $\fuc{SEN}(\Sigma)$
	to $\fuc{SEN}(\Sigma^{\prime})$ . Then
	$i$ is a {\em (local) semi-interpretation} of $\pair{\Sigma, \Phi}$ in $\pair{\Sigma', \Phi'}$ 
	if, for all $\phi \in  
	\fuc{SEN}(\Sigma)$,
	\begin{eqnarray}
\phi \in C_{\Sigma}(\Phi)\; \imp\; i(\phi) \subseteq C_{\Sigma'}(\Phi')
	\end{eqnarray}
	It is a {\em (local) interpretation} of $\pair{\Sigma, \Phi}$ in $\pair{\Sigma', \Phi'}$  if, 
	\begin{eqnarray}\label{eq:localint}
	\phi \in C_{\Sigma}(\Phi)\; \dimp\; i(\phi) \subseteq C_{\Sigma'}(\Phi')
	\end{eqnarray}
Finally, we say that $i$ \emph{(locally) interprets}  $\pair{\Sigma, \Phi}$, if there is a specification $\pair{\Sigma^0, \Phi^0}$ 
on which $\pair{\Sigma, \Phi}$ is interpreted by $i$.
	\end{definition}

Adopting  expression \aspas{$\phi$ \emph{is true in} specification $\pair{\Sigma, \Phi}$} to abbreviate the fact that
$\phi \in C_{\Sigma}(\Phi)$, definition \eqref{eq:localint} can be read as \emph{$\phi$ is true in $\pair{\Sigma, \Phi}$  
iff $i(\phi)$ is true in $\pair{\Sigma', \Phi'}$}.

\begin{definition}\label{df:specrbi}
Let $SP= \pair{\Sigma, \Phi}$ be a specification and  $\fdec{i}{\fuc{SEN}(\Sigma)}{\pow{(\fuc{SEN}(\Sigma'))}}$ 
a translation which interprets $SP$.  A specification $SP' = \pair{\Sigma', \Phi'}$ refines $SP$ via local interpretation $i$,
written as $SP \leadsto_i SP'$,
if for all $\phi \in \fuc{SEN}(\Sigma)$, 
\begin{equation}
\phi \in C_{\Sigma}(\Phi)\; \imp\; i(\phi) \subseteq C_{\Sigma'}(\Phi')
\end{equation}
\end{definition}

Given a $\sigma:\Sigma\rightarrow \Sigma'\in Sign$, $\fuc{SEN}(\sigma):\fuc{SEN}(\Sigma)\rightarrow \fuc{SEN}(\Sigma')$ induces a translation that maps each $\phi\in \fuc{SEN}(\Sigma)$ into $\{\fuc{SEN}(\sigma)(\phi)\}$. In the sequel we identify this translation simply with $\fuc{SEN}(\sigma)$.

\begin{definition} A signature morphism $\sigma:\Sigma\rightarrow \Sigma'\in Sign$ is \emph{conservative} if for any $\Phi \subseteq \fuc{SEN}(\Sigma)$, $\fuc{SEN}(\sigma)$ interprets $\langle \Sigma, \Phi\rangle$ in $SP^\sigma=\langle \Sigma', \fuc{SEN}(\sigma)(\Phi) \rangle$.
\end{definition}
Observe that $\fuc{SEN}(\sigma)$ is always a semi-interpretation from $SP$ to $SP^\sigma$. Moreover,  note that 
conservativeness is a stronger notion than that of interpretability. 

\begin{theorem}
Let $\sigma:\Sigma\rightarrow \Sigma'\in Sign$ be a conservative signature morphism, $SP=\langle \Sigma, \Phi\rangle$ a specification over $I$ and $\Phi'\in \fuc{SEN}(\Sigma')$. Then, 
\begin{equation}
	\fuc{SEN}(\sigma)(\Phi)\subseteq C_{\Sigma'}(\Phi') \text{ implies that } SP \leadsto_{\fuc{SEN}(\sigma)} \langle \Sigma', \Phi'\rangle
\end{equation}
\end{theorem}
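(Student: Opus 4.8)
The plan is to unfold $SP \leadsto_{\fuc{SEN}(\sigma)} \pair{\Sigma', \Phi'}$ (Definition \ref{df:specrbi}) into its two defining demands and discharge each one using conservativeness of $\sigma$ together with the closure axioms. Writing $i = \fuc{SEN}(\sigma)$ and $SP^\sigma = \pair{\Sigma', \fuc{SEN}(\sigma)(\Phi)}$, the relation $SP \leadsto_i \pair{\Sigma', \Phi'}$ asks, first, that $i$ interpret $SP$, and second, that $\phi \in C_\Sigma(\Phi)$ imply $i(\phi) \subseteq C_{\Sigma'}(\Phi')$ for every $\phi \in \fuc{SEN}(\Sigma)$.

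First I would dispatch the interpretation clause. Because $\sigma$ is conservative, $i$ is by definition a (local) interpretation of $\pair{\Sigma, \Phi}$ in $SP^\sigma$; this very specification $SP^\sigma$ then serves as the witness demanded by the opening clause of Definition \ref{df:specrbi}, so $i$ interprets $SP$.

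Next I would establish the refinement implication. Fixing $\phi$ with $\phi \in C_\Sigma(\Phi)$, the forward direction of the interpretation biconditional \eqref{eq:localint} for $\pair{\Sigma,\Phi}$ and $SP^\sigma$ (available precisely because $\sigma$ is conservative) yields
\[ i(\phi) \;\subseteq\; C_{\Sigma'}(\fuc{SEN}(\sigma)(\Phi)). \]
To reach the goal it then remains to show $C_{\Sigma'}(\fuc{SEN}(\sigma)(\Phi)) \subseteq C_{\Sigma'}(\Phi')$. This is exactly where the hypothesis $\fuc{SEN}(\sigma)(\Phi) \subseteq C_{\Sigma'}(\Phi')$ enters: monotonicity (axiom (c)) upgrades it to $C_{\Sigma'}(\fuc{SEN}(\sigma)(\Phi)) \subseteq C_{\Sigma'}(C_{\Sigma'}(\Phi'))$, and idempotency (axiom (b)) rewrites the right-hand side as $C_{\Sigma'}(\Phi')$. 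Chaining the two inclusions by transitivity of $\subseteq$ gives $i(\phi) \subseteq C_{\Sigma'}(\Phi')$, which closes the second clause.

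I do not expect a genuine obstacle here: once the standard closure fact \emph{$A \subseteq C(B)$ implies $C(A) \subseteq C(B)$} is recorded, the argument is essentially bookkeeping. The only subtlety worth flagging is that conservativeness does double duty --- it simultaneously supplies the interpretation witness required by the first clause and the forward implication consumed by the second --- so that the hypothesis on $\Phi'$ is used solely to absorb $C_{\Sigma'}(\fuc{SEN}(\sigma)(\Phi))$ into $C_{\Sigma'}(\Phi')$.
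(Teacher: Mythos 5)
Your proof is correct. Note that the paper itself states this theorem without proof (no proof environment follows it in the source), so there is no in-paper argument to compare against; your argument supplies exactly the reasoning the omission presumably took for granted: conservativeness yields $SP^\sigma = \pair{\Sigma', \fuc{SEN}(\sigma)(\Phi)}$ as the witness that $\fuc{SEN}(\sigma)$ interprets $SP$ (the side condition built into Definition~\ref{df:specrbi}), and the hypothesis $\fuc{SEN}(\sigma)(\Phi)\subseteq C_{\Sigma'}(\Phi')$ is absorbed via monotonicity and idempotency to convert $i(\phi)\subseteq C_{\Sigma'}(\fuc{SEN}(\sigma)(\Phi))$ into $i(\phi)\subseteq C_{\Sigma'}(\Phi')$. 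One small sharpening of your closing remark about conservativeness doing ``double duty'': the forward implication you consume in the second clause does not in fact need conservativeness at all, since, as the paper observes just before the theorem, $\fuc{SEN}(\sigma)$ is \emph{always} a semi-interpretation from $SP$ to $SP^\sigma$ (this is immediate from closure axiom (d), $\fuc{SEN}(\sigma)(C_{\Sigma}(\Phi)) \subseteq C_{\Sigma'}(\fuc{SEN}(\sigma)(\Phi))$). So conservativeness is genuinely needed only for the interpretability clause, and your proof would go through verbatim with that single use isolated.
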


In practice, new specifications are built from old through application of a number of specification constructors.
As a minimum set we consider operators to join two specifications, to translate one into another,
and to derive one from another going backward along a signature morphism. The following definition characterizes
along these lines a  notion of structured specification in an arbitrary $\pi$-institution.

\begin{definition}\label{df:sc}
Structured specifications over an arbitrary $\pi$-institution $I  = \pair{Sign,\fuc{SEN},(C_{\Sigma})_{\Sigma \in |Sign|}}$ are defined inductively as follows, taking flat specifications
as the base case.
\begin{itemize}
\item For a signature $\Sigma$, the union of specifications $SP_1 = \pair{\Sigma, \Phi_1}$ and
$SP_2 = \pair{\Sigma, \Phi_2}$ is defined as
\begin{equation*}
\fuc{union} (SP_1, SP_2)\; \, =\; \, \pair{\Sigma, \Phi_1 \cup \Phi_2}
\end{equation*}
\item The translation of specification $SP = \pair{\Sigma, \Phi}$ through a morphism  $\sigma:\Sigma\rightarrow \Sigma'$ in $Sign$
is defined as
\begin{equation*}
\fuc{translate}\; SP\;  \fuc{through}\; \sigma \, \; =\; \, \pair{\Sigma', \fuc{SEN}(\sigma)(\Phi)}
\end{equation*}
\item The derivation of a $\Sigma$ specification from   $SP' = \pair{\Sigma', \Phi'}$ 
through a morphism  $\sigma:\Sigma\rightarrow \Sigma'$ in $Sign$
is defined as
\begin{equation*}
\fuc{derive}\; SP'\;  \fuc{through}\; \sigma \, \; =\; \, \pair{\Sigma, \Psi}
\end{equation*}
where $\Psi = \setdef{\psi}{\fuc{SEN}(\sigma)(\psi) \in C_{\Sigma'} (\Phi')}$.
\end{itemize}
 \end{definition}
 

Of course, it is desirable that refinement be preserved by horizontal composition of specifications. In particular,
refinement by interpretation should be preserved by all  specification constructors in Definition \ref{df:sc}.
The result is non trivial. For $\fuc{union}$ we have,
 
 \begin{lemma}
 Let  $\fdec{i}{\fuc{SEN}(\Sigma)}{\pow{(\fuc{SEN}(\Sigma'))}}$  be a local interpretation,
 and  $SP_1 = \pair{\Sigma,\Phi_1}$, $SP_2 = \pair{\Sigma,\Phi_2}$ specifications such that
 $SP_1 \leadsto_i SP'_1$ and $SP_2 \leadsto_i SP'_2$.
 If $i$ interprets $\fuc{union} (SP_1, SP_2)$, then 
 $\fuc{union} (SP_1, SP_2) \leadsto_i \fuc{union} (SP'_1, SP'_2)$.
 \end{lemma}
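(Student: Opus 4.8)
The plan is to discharge the two clauses of Definition~\ref{df:specrbi} for the pair $\fuc{union}(SP_1, SP_2) = \pair{\Sigma, \Phi_1 \cup \Phi_2}$ and $\fuc{union}(SP'_1, SP'_2) = \pair{\Sigma', \Phi'_1 \cup \Phi'_2}$. The first clause, that $i$ interprets $\fuc{union}(SP_1, SP_2)$, is exactly the standing hypothesis, so all the work goes into the second, semi-interpretation clause: for every $\phi \in \fuc{SEN}(\Sigma)$,
\[
\phi \in C_{\Sigma}(\Phi_1 \cup \Phi_2) \; \imp \; i(\phi) \subseteq C_{\Sigma'}(\Phi'_1 \cup \Phi'_2).
\]

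First I would settle the generators. Fix $\psi \in \Phi_1 \cup \Phi_2$. If $\psi \in \Phi_1$ then $\psi \in C_{\Sigma}(\Phi_1)$ by clause~(a), so $SP_1 \leadsto_i SP'_1$ yields $i(\psi) \subseteq C_{\Sigma'}(\Phi'_1)$, and clause~(c) promotes this to $i(\psi) \subseteq C_{\Sigma'}(\Phi'_1 \cup \Phi'_2)$; the case $\psi \in \Phi_2$ is symmetric, through $SP_2 \leadsto_i SP'_2$. Collecting these inclusions, the image of the generators is contained in $C_{\Sigma'}(\Phi'_1 \cup \Phi'_2)$, whence clauses~(c) and~(b) give
\[
C_{\Sigma'}\Bigl(\textstyle\bigcup_{\psi \in \Phi_1 \cup \Phi_2} i(\psi)\Bigr) \subseteq C_{\Sigma'}(\Phi'_1 \cup \Phi'_2).
\]

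It remains to lift this from the generators to arbitrary consequences, and here the interpretability hypothesis is essential. Let $\Psi_0 \subseteq \fuc{SEN}(\Sigma')$ witness that $i$ interprets the union, i.e. $\phi \in C_{\Sigma}(\Phi_1 \cup \Phi_2) \dimp i(\phi) \subseteq C_{\Sigma'}(\Psi_0)$. For $\phi \in C_{\Sigma}(\Phi_1 \cup \Phi_2)$ the forward half gives $i(\phi) \subseteq C_{\Sigma'}(\Psi_0)$, so it would suffice to know $C_{\Sigma'}(\Psi_0) \subseteq C_{\Sigma'}(\Phi'_1 \cup \Phi'_2)$. Mirroring the deductive-systems case, where $\pair{V, \Gamma}$ is interpreted in $\pair{V', \tau(\Gamma)}$, I would take the witness to be the image itself, $\Psi_0 = \bigcup_{\psi \in \Phi_1 \cup \Phi_2} i(\psi)$; the inclusion displayed above then closes the argument. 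The main obstacle is exactly the legitimacy of this choice, since the hypothesis asserts only that some witness exists and a general $\Psi_0$ need not satisfy $C_{\Sigma'}(\Psi_0) \subseteq C_{\Sigma'}(\Phi'_1 \cup \Phi'_2)$, its elements lying outside the range of $i$. The delicate step is therefore to show that the image set is itself an admissible interpretation witness for $\fuc{union}(SP_1, SP_2)$: its reverse implication is automatic, because the generators map into $C_{\Sigma'}(\Psi_0)$ under any witness $\Psi_0$ and hence $C_{\Sigma'}$ of the image sits inside $C_{\Sigma'}(\Psi_0)$, but its forward implication $\phi \in C_{\Sigma}(\Phi_1 \cup \Phi_2) \imp i(\phi) \subseteq C_{\Sigma'}(\bigcup_{\psi} i(\psi))$ is tantamount to the semi-interpretation clause we are after. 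Establishing this forward half, namely that the union's consequences are already captured by the images of its generators, is where the real content of the lemma lies; once it is in place, monotonicity delivers the refinement.
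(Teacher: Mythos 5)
Your argument, as submitted, is not a proof: it correctly reduces the lemma to the claim that the image set $I=\bigcup_{\psi\in\Phi_1\cup\Phi_2} i(\psi)$ is an admissible interpretation witness for $\fuc{union}(SP_1,SP_2)$ --- equivalently, to the forward implication $\phi \in C_{\Sigma}(\Phi_1\cup\Phi_2) \imp i(\phi)\subseteq C_{\Sigma'}(I)$ --- and then explicitly leaves that implication open. The gap sits precisely at the sentences $\phi \in C_{\Sigma}(\Phi_1\cup\Phi_2)$ lying outside $C_{\Sigma}(\Phi_1)\cup C_{\Sigma}(\Phi_2)$: for those the two refinement hypotheses say nothing, and, as you yourself observe, the interpretability hypothesis only yields the wrong-way inclusion $C_{\Sigma'}(I)\subseteq C_{\Sigma'}(\Psi_0)$, the witness $\Psi_0$ being existentially quantified with no relation to $\Phi'_1\cup\Phi'_2$. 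Worse, the missing step cannot be recovered from the stated hypotheses at all. Take $\fuc{SEN}(\Sigma)=\enset{a,b,c}$ with $C_{\Sigma}$ generated by the single rule ``from $a$ and $b$ infer $c$'', $\fuc{SEN}(\Sigma')=\enset{a',b',c'}$ with $C_{\Sigma'}(X)=X$, and $i(a)=\enset{a'}$, $i(b)=\enset{b'}$, $i(c)=\enset{c'}$; put $\Phi_1=\enset{a}$, $\Phi_2=\enset{b}$, $\Phi'_1=\enset{a'}$, $\Phi'_2=\enset{b'}$. Then $i$ interprets $SP_1$, $SP_2$ and $\fuc{union}(SP_1,SP_2)$ (with witnesses $\enset{a'}$, $\enset{b'}$ and $\enset{a',b',c'}$ respectively), and $SP_1\leadsto_i SP'_1$, $SP_2\leadsto_i SP'_2$ hold, yet $c\in C_{\Sigma}(\Phi_1\cup\Phi_2)$ while $i(c)=\enset{c'}\not\subseteq C_{\Sigma'}(\enset{a',b'})$. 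So no elaboration of your witness-swapping strategy can close the gap; some additional hypothesis on $i$ or on the target axioms is needed.

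For comparison, the paper's own proof takes a different and much shorter route: from the two refinement hypotheses it derives, by monotonicity, that $\phi\in C_{\Sigma}(\Phi_1)\cup C_{\Sigma}(\Phi_2)$ implies $i(\phi)\subseteq C_{\Sigma'}(\Phi'_1)\cup C_{\Sigma'}(\Phi'_2)$, and then identifies this, ``by definition'', with $\fuc{union}(SP_1,SP_2)\leadsto_i\fuc{union}(SP'_1,SP'_2)$; the hypothesis that $i$ interprets the union is never invoked. Since the union refinement requires the antecedent $\phi\in C_{\Sigma}(\Phi_1\cup\Phi_2)$, which in general strictly contains $C_{\Sigma}(\Phi_1)\cup C_{\Sigma}(\Phi_2)$, the paper's final step silently skips exactly the case your proposal isolates. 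In that sense your diagnosis of where the real content lies is sharper than the published argument; but, for the record, neither your proposal nor the paper's chain discharges that case, and the example above shows it fails under the hypotheses as stated.
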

 \begin{proof}
 For all $\phi \in \fuc{SEN}(\Sigma)$, we reason
  \begin{eqnarray*}
 & &  
 SP_1 \leadsto_i SP'_1\; \; \e\; \;  SP_2 \leadsto_i SP'_2
  \just\dimp{definition}
 \phi \in C_{\Sigma}(\Phi_1) \imp i(\phi)  \subseteq C_{\Sigma'} (\Phi'_1)\; \e\;
\phi \in C_{\Sigma}(\Phi_2) \imp i(\phi)  \subseteq C_{\Sigma'} (\Phi'_2)
 \just\imp{$C_{\Sigma}, C_{\Sigma'}$  monotonic}
 \phi \in (C_{\Sigma}(\Phi_1) \cup C_{\Sigma}(\Phi_2)) \imp i(\phi)  \subseteq (C_{\Sigma'} (\Phi'_1) \cup C_{\Sigma'} (\Phi'_2))
   \just\dimp{definition}
 \fuc{union} (SP_1, SP_2) \leadsto_i \fuc{union} (SP'_1, SP'_2)
 \end{eqnarray*}
  \end{proof}
\noindent

The remaining cases are not straightforward. Actually, achieving compati\-bility  entails the need for imposing
 some non trivial conditions on morphisms.


\section{Towards a refinement calculus}\label{sc:calc}

Having defined refinement by interpretation \emph{across} $\pi$-institutions and \emph{inside} an arbitrary 
$\pi$-institution, this section sketches their interconnections. Our first step is to define how a specification
in an institution $I$ translates to $I'$ along an interpretation. 

\begin{definition}
 Let $\fdec{\rho = \pair{\f, \alpha}}{I}{I'}$ be a translation between $\pi$-institutions $I$ and $I'$ and $SP = \pair{\Sigma, \Phi}$
 a specification in $I$. The translation $\hat{\rho}(SP)$ of $SP$ through $\rho$ is defined by
  \begin{equation}
 \hat{\rho}\, \pair{\Sigma, \Phi} \; =\; \pair{\f(\Sigma), \alpha_{\Sigma}(\Phi)}
\end{equation}
 \end{definition}

Next lemma answers  the following question: is refinement by interpretation over arbitrary $\pi$-institutions
 preserved by the specification constructors?

 \begin{lemma}\label{lm:struct}
 The definition of specification translation is structural over the specification constructors given in definition \ref{df:sc},
 i.e.
 \begin{align*}
  \hat{\rho}\, (\fuc{union} (SP_1, SP_2)) &\; =\; \fuc{union} (\hat{\rho}(SP_1),  \hat{\rho}(SP_2))\\
   \hat{\rho}\, (\fuc{translate}\; SP\;  \fuc{through}\; \sigma) &\; =\; \fuc{translate}\;  \hat{\rho}(SP)\;  \fuc{through}\; \f(\sigma)\\
   \hat{\rho}\, (\fuc{derive}\; SP'\;  \fuc{through}\; \sigma) &\; =\; \fuc{derive}\; \hat{\rho}(SP')\;  \fuc{through}\; \f(\sigma)\\
 \end{align*}

 \end{lemma}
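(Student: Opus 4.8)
The plan is to prove the three identities one at a time, in each case unfolding both sides through the definition of $\hat{\rho}$ and of the relevant constructor and then comparing the two components of the resulting pair separately. The signature components will always agree for a purely structural reason: since $\f$ is a functor, $\f(\sigma)$ is a morphism $\f(\Sigma)\rightarrow\f(\Sigma')$, so translating $\hat{\rho}(SP)$ along $\f(\sigma)$ lands in $\f(\Sigma')$ and deriving along $\f(\sigma)$ lands in $\f(\Sigma)$ --- exactly the signatures produced by applying $\hat{\rho}$ to the $I$-side constructions. Hence each identity reduces to an equality of sets of sentences, and throughout I will use that $\alpha_{\Sigma}$ and the image maps $\fuc{SEN}(\sigma)$ are the set-extensions of (multi)functions and therefore commute with arbitrary unions.

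For $\fuc{union}$, unfolding reduces the goal to $\alpha_{\Sigma}(\Phi_1\cup\Phi_2)=\alpha_{\Sigma}(\Phi_1)\cup\alpha_{\Sigma}(\Phi_2)$, which is immediate since the direct image of a (multi)function distributes over union. For $\fuc{translate}$, after unfolding the left-hand side to $\pair{\f(\Sigma'),\alpha_{\Sigma'}(\fuc{SEN}(\sigma)(\Phi))}$ and the right-hand side to $\pair{\f(\Sigma'),\fuc{SEN}'(\f(\sigma))(\alpha_{\Sigma}(\Phi))}$, the remaining obligation is $\alpha_{\Sigma'}(\fuc{SEN}(\sigma)(\Phi))=\fuc{SEN}'(\f(\sigma))(\alpha_{\Sigma}(\Phi))$. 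This is precisely the naturality square of $\alpha\colon\fuc{SEN}\rightarrow\pow{~}\fuc{SEN}'\f$ instantiated at $\sigma$, read at the level of sets (the pointwise naturality extends to sets again by union-preservation). So the first two cases are essentially routine.

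The $\fuc{derive}$ case is the real content. Unfolding, the goal becomes $\alpha_{\Sigma}(\Psi)=\Psi^{*}$, where $\Psi=\setdef{\psi}{\fuc{SEN}(\sigma)(\psi)\in C_{\Sigma'}(\Phi')}$ arises from deriving in $I$ and $\Psi^{*}=\setdef{\chi}{\fuc{SEN}'(\f(\sigma))(\chi)\in C'_{\f(\Sigma')}(\alpha_{\Sigma'}(\Phi'))}$ arises from deriving in $I'$. The forward inclusion $\alpha_{\Sigma}(\Psi)\subseteq\Psi^{*}$ does go through, assuming (as in the ambient development) that $\rho$ is at least a semi-interpretation: given $\chi\in\alpha_{\Sigma}(\psi)$ with $\psi\in\Psi$, naturality yields $\fuc{SEN}'(\f(\sigma))(\chi)\in\alpha_{\Sigma'}(\fuc{SEN}(\sigma)(\psi))$, and since $\fuc{SEN}(\sigma)(\psi)\in C_{\Sigma'}(\Phi')$ the defining inequality of a semi-interpretation gives $\alpha_{\Sigma'}(\fuc{SEN}(\sigma)(\psi))\subseteq C'_{\f(\Sigma')}(\alpha_{\Sigma'}(\Phi'))$, whence $\chi\in\Psi^{*}$.

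The main obstacle is the reverse inclusion $\Psi^{*}\subseteq\alpha_{\Sigma}(\Psi)$. An arbitrary $\chi\in\Psi^{*}$ is only required to be a $\f(\Sigma)$-sentence of $I'$, and there is no reason for it to lie in the image of $\alpha_{\Sigma}$; hence it cannot in general be traced back to any $\psi\in\fuc{SEN}(\Sigma)$. To close this direction I expect to need the full interpretation biconditional (to invert the defining implication of $\Psi$) together with a fullness/surjectivity hypothesis on $\alpha$ guaranteeing that every sentence relevant to $\Psi^{*}$ is already of the form $\alpha_{\Sigma}(\psi)$ --- failing which the identity should be read modulo the subframe generated by the image of $\alpha$. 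This is exactly where the ``non trivial conditions on morphisms'' announced earlier come into play, and it is the step I expect to demand genuine work rather than mere unfolding.
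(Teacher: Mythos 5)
Your treatment of $\fuc{union}$ and $\fuc{translate}$ is essentially the paper's own proof: unfold $\hat{\rho}$ and the constructor, use that the set-extension of $\alpha_{\Sigma}$ distributes over union, and invoke the naturality square $\alpha_{\Sigma'}\circ \fuc{SEN}(\sigma) = \fuc{SEN}^{\prime}(\f(\sigma))\circ \alpha_{\Sigma}$ (read at the powerset level) for the second case; the paper even records the same caveat you do, namely that the right-hand side must be understood through the powerset extension of $\f(\sigma)$. So on the first two identities you and the paper coincide.

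On $\fuc{derive}$ you diverge from the paper, and your scepticism is warranted: the paper's proof disposes of this case with the phrase ``the third being similar'' and proves nothing, but it is \emph{not} similar, because $\fuc{derive}$ is the only constructor whose definition mentions the closure operators, and the lemma's sole hypothesis --- that $\rho$ is a translation --- imposes no relation whatever between $C_{\Sigma'}$ and $C^{\prime}_{\f(\Sigma')}$. Your analysis of the two inclusions is accurate. For the forward one, note that you need to \emph{add} the semi-interpretation property at $\Sigma'$, since the stated hypothesis does not supply it; with it, your argument via naturality is correct. Your obstruction to the reverse inclusion is genuine and easy to instantiate: take $\f = Id$ and $\alpha$ the inclusion of a sub-$\pi$-institution with $\fuc{SEN}(\Sigma)\subsetneq \fuc{SEN}^{\prime}(\Sigma)$; any $\chi \in \fuc{SEN}^{\prime}(\f(\Sigma))\setminus \fuc{SEN}(\Sigma)$ whose image under $\fuc{SEN}^{\prime}(\f(\sigma))$ lands in $C^{\prime}_{\f(\Sigma')}(\alpha_{\Sigma'}(\Phi'))$ belongs to $\Psi^{*}$ but cannot belong to $\alpha_{\Sigma}(\Psi)$. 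You are also right about the residual multifunction subtlety even under fullness: membership of one element of $\alpha_{\Sigma}(\psi)$ in the closure does not yield containment of the whole set $\alpha_{\Sigma}(\psi)$, which is what the interpretation biconditional needs to be inverted. In short, where your proposal and the paper part ways, the gap lies in the paper's proof, not in your argument: the third identity is false at the stated level of generality, and your diagnosis pinpoints the extra hypotheses under which it could be repaired.
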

 
 \def\igual{=}
 \begin{proof}
  For the first case  let $SP_1 =\pair{\Sigma_1, \Phi_1}$ and $SP_2 =\pair{\Sigma_2, \Phi_2}$. Then,
 \begin{eqnarray*}
 & &  \hat{\rho}\, (\fuc{union} (SP_1, SP_2)) 
 \just\igual{definition of $\fuc{union}$}
  \hat{\rho}\, \pair{\Sigma, \Phi_1 \cup \Phi_2}
   \just\igual{definition of $\hat{\rho}$}
  \pair{\f(\Sigma), \alpha(\Phi_1 \cup \Phi_2)}
     \just\igual{$\alpha$ is a natural transformation}
  \pair{\f(\Sigma), \alpha(\Phi_1) \cup \alpha(\Phi_2)}
   \just\igual{definition of $\fuc{union}$}
   \fuc{union} (\pair{\f(\Sigma), \alpha(\Phi_1)}, \pair{\f(\Sigma), \alpha(\Phi_2)})
 \just\igual{definition of $\hat{\rho}$}
 \fuc{union} (\hat{\rho}(SP_1),  \hat{\rho}(SP_2))\     
 \end{eqnarray*}
~\\

\noindent 
 Consider now the second case (the third being similar):
  \begin{eqnarray*}
 & &  \hat{\rho}\,  (\fuc{translate}\; SP\;  \fuc{through}\; \sigma)
 \just\igual{definition of $\fuc{translate}$}
\hat{\rho}\, \pair{\Sigma', \sigma(\Phi)}
   \just\igual{definition of $\hat{\rho}$}
 \pair{\f(\Sigma'), \alpha_{\Sigma'}(\sigma(\Phi))}
     \just\igual{$\alpha$ is a natural transformation}
   \pair{\f(\Sigma'), \pow{(\sigma)}( \alpha_{\Sigma}(\Phi))}
 \just\igual{definition of $\fuc{translate}$}
    \fuc{translate}\;  \pair{\f(\Sigma'),  \alpha_{\Sigma}(\Phi)}\;  \fuc{through}\;  \f(\sigma)  
     \just\igual{definition of $\hat{\rho}$}
 \fuc{translate}\;  \hat{\rho}(SP)\;  \fuc{through}\; \f(\sigma)    
 \end{eqnarray*} 
 Note a slight abuse of notation: the extension of $ \hat{\rho}(SP)$ in the conclusion is actually through the powerset extension
 of $\f(\sigma)$.
 \end{proof}
 

%
%
%


\section{Conclusions and related work}\label{sc:conc}

In software development, one often has to resort to a number of different logical systems 
 to capture contrasting aspects of systems' requirements and programming paradigms.
This paper uses $\pi$-institutions to formalize arbitrary logical systems and lifts to such level a 
recently proposed \cite{MMB09a,MMB09b}
approach to refinement based on logical interpretation.

Refinement by interpretation is formulated at both a global (\ie, across $\pi$-institutions) and local (\ie, between specifications
inside an arbitrary $\pi$-institution) level. The paper introduces a notion of structured specification and shows that, at both levels,
refinement by interpretation respects the proposed specification construc\-tors. Actually, the institutional setting
not only makes it possible to go a  step further from \cite{MMB09b} in generalizing
 the concept to arbitrary logics, but also provides a basis to build up a refinement calculus of 
 \aspas{institution-independent}, structured specifications. 
  
  We close the paper with a few remarks on  \emph{refinement by interpretation} in itself
  and some pointers to related work.
 
The idea of relaxing what counts as a valid refinement of an algebraic specification, by
replacing \emph{signature morphisms} by \emph{logic interpretations} is, to the best of our
knowledge, new. The piece of research initiated with \cite{MMB09a} up to the present paper 
was directly inspired by the second and third author's work on algebraic logic as reported, respectively,  in \cite{Mar06}
and \cite{Mad08}, where
the notion of an \emph{interpretation} plays a fundamental role (see, \eg,
\cite{BP89,BP,BR03,proto}) and occurs in different variants.
In particular, the notion of \emph{conservative translation} intensively studied by Feitosa and Ottaviano
\cite{traducoesconservativas} is the closest to our own approach.

 Refinement by interpretation  should also be related to the
extensive work of Maibaum, Sadler and Veloso in the 70's and the 80's, as documented, for example, in \cite{Maibaum1,Maibaum2}.
The authors resort to  interpretations between theories and conservative extensions to define a syntactic notion of
refinement according to which
  a specification $SP'$ refines a specification $SP$ if there is an interpretation of $SP'$ into a conservative extension of $SP$. It is
   shown that these refinements can be vertically composed, therefore  entailing stepwise development.  This notion is, however,
   somehow restrictive since it requires all maps to be conservative, whereas in  program development it is usually
   enough to guarantee that  requirements are preserved by the underlying translation. Moreover, in that approach
    the interpretation edge of a refinement diagram needs to satisfy a number of extra properties.

Related work also appears in \cite{FM93,Vo05} where interpretations between theo\-ries are studied,
as  in the present paper, in the abstract framework of $\pi$-institu\-tions. The first reference is a generalization of the work of Maibaum and his colla\-borators, whereas the second  generalizes to $\pi$-institutions the abstract algebraic logic
 treatment of algebraic semantics on sentential logics. Notions of interpretation between institutions also appear in
 \cite{Bo02} and \cite{Tar95} under the designation of \emph{institution representation}. Differently from the 
 one used in this paper, borrowed from \cite{Vo03}, they are not defined as multifunctions.
  The work of
Jos\'e Meseguer \cite{meseguer} on \emph{general logics}, where a  theory of interpretations between logical systems is developed, should
also be mentioned.

We believe this approach to refinement through logical  interpretation has a real application potential, namely to deal
with specifications spanning through different specification logics. Particularly
deserving to be considered, but still requiring further investigation, are
observational logic \cite{BHK03}, hidden logic
\cite{grigore_thesis,conditional_prof} and behavioral logic \cite{Hen97}.  
As remarked above, the study of refinement preservation by horizontal composition remains a challenge and a 
topic of current research.

Other research topics arise concerns the ways in which \emph{global} and  \emph{local} levels 
interrelate. For example, we are still studying to what extent a local refinement by interpretation of a specification in 
a $\pi$-institution $I$, lifts to another  local refinement  of its translation induced by a global interpretation from $I$ to another
$\pi$-institution $I'$. 

\subsection*{Acknowledgments}
This research was partially supported by Fct (the Portuguese Foundation for Science and Technology) under contract
\texttt{PTDC\-/EIA-\-CCO/108302/2008} --- the \textsc{Mondrian} project, and the \textsc{Cidma} research center.
 M. A. Martins was further supported by 
project  \emph{Nociones de Completud}, reference FFI2009-09345 (MICINN - Spain).
Finally, A. Madeira was also supported by SFRH/BDE/ 33650/2009, a joint PhD grant by FCT and Critical Software S.A., Portugal.

\bibliographystyle{eptcsalpha}
\bibliography{refs}

\begin{thebibliography}{MMB09b}
\providecommand{\bibitemdeclare}[2]{}
\providecommand{\bibitemend}{}
\providecommand{\bibliographystart}{}
\providecommand{\bibliographyend}{}
\providecommand{\urlprefix}{Available at }
\providecommand{\url}[1]{\texttt{#1}}
\providecommand{\href}[2]{\texttt{#2}}
\providecommand{\urlalt}[2]{\href{#1}{#2}}
\providecommand{\doi}[1]{doi:\urlalt{http://dx.doi.org/#1}{#1}}
\providecommand{\bibinfo}[2]{#2}
\bibliographystart

\bibitemdeclare{incollection}{AN94}
\bibitem[AN94]{AN94}
\bibinfo{author}{H.~Andreka} \& \bibinfo{author}{I.~Nemeti}
  (\bibinfo{year}{1994}): \emph{\bibinfo{title}{General Algebraic Logic: A
  Perspective on \aspas{What is logic?}}}
\newblock In: {\sl \bibinfo{booktitle}{What is a Logical System? - Studies in
  Logic and Computation, Vol. 4}}, \bibinfo{publisher}{Oxford University
  Press}.
\bibitemend

\bibitemdeclare{article}{BHK03}
\bibitem[BHK03]{BHK03}
\bibinfo{author}{M.~Bidoit}, \bibinfo{author}{R.~Hennicker} \&
  \bibinfo{author}{A.~Kurz} (\bibinfo{year}{2003}):
  \emph{\bibinfo{title}{{Observational logic, constructor-based logic, and
  their duality.}}}
\newblock {\sl \bibinfo{journal}{Theor. Comput. Sci.}}
  \bibinfo{volume}{298}(\bibinfo{number}{3}), pp. \bibinfo{pages}{471--510},
  \doi{10.1016/S0304-3975(02)00865-4.}
\bibitemend

\bibitemdeclare{article}{Bo02}
\bibitem[Bor02]{Bo02}
\bibinfo{author}{T.~Borzyszkowski} (\bibinfo{year}{2002}):
  \emph{\bibinfo{title}{Logical Systems for Stuctured Specifications}}.
\newblock {\sl \bibinfo{journal}{Theor. Comp. Science}} \bibinfo{volume}{286},
  pp. \bibinfo{pages}{197--245}, \doi{10.1016/S0304-3975(01)00317-6}.
\bibitemend

\bibitemdeclare{article}{BP89}
\bibitem[BP89]{BP89}
\bibinfo{author}{W.~Blok} \& \bibinfo{author}{D.~Pigozzi}
  (\bibinfo{year}{1989}): \emph{\bibinfo{title}{Algebraizable Logics}}.
\newblock {\sl \bibinfo{journal}{Memoirs of the American Mathmatical Society}}
  \bibinfo{volume}{396}.
\newblock \bibinfo{note}{AMS - American Math. Soc., Providence}.
\bibitemend

\bibitemdeclare{misc}{BP}
\bibitem[BP01]{BP}
\bibinfo{author}{W.~Blok} \& \bibinfo{author}{D.~Pigozzi}
  (\bibinfo{year}{2001}): \emph{\bibinfo{title}{Abstract Algebraic Logic and
  the Deduction Theorem}}.
\newblock \bibinfo{note}{Preprint available from
  \texttt{www.math.iastate.edu/dpigozzi/papers/aaldedth.pdf}}.
\bibitemend

\bibitemdeclare{article}{BR03}
\bibitem[BR03]{BR03}
\bibinfo{author}{W.~Blok} \& \bibinfo{author}{J.~Rebagliato}
  (\bibinfo{year}{2003}): \emph{\bibinfo{title}{Algebraic Semantics for
  Deductive Systems}}.
\newblock {\sl \bibinfo{journal}{Studia Logica}}
  \bibinfo{volume}{74}(\bibinfo{number}{1-2}), pp. \bibinfo{pages}{153--180},
  \doi{10.1023/A:1024626023417}.
\bibitemend

\bibitemdeclare{article}{batory}
\bibitem[BSR04]{batory}
\bibinfo{author}{D.~Batory}, \bibinfo{author}{J.~N. Sarvela} \&
  \bibinfo{author}{A.~Rauschmayer} (\bibinfo{year}{2004}):
  \emph{\bibinfo{title}{Scaling step-wise refinement}}.
\newblock {\sl \bibinfo{journal}{IEEE Trans. in Sofware Engineering}}
  \bibinfo{volume}{30}(\bibinfo{number}{6}), pp. \bibinfo{pages}{355--371},
  \doi{10.1109/TSE.2004.23}.
\bibitemend

\bibitemdeclare{book}{proto}
\bibitem[Cze01]{proto}
\bibinfo{author}{J.~Czelakowski} (\bibinfo{year}{2001}):
  \emph{\bibinfo{title}{{Protoalgebraic Logics}}}.
\newblock \bibinfo{publisher}{{Trends in logic, Studia Logica Library, Kluwer
  Academic Publishers}}.
\bibitemend

\bibitemdeclare{article}{DF02}
\bibitem[DF02]{DF02}
\bibinfo{author}{R.~Diaconescu} \& \bibinfo{author}{K.~Futatsugi}
  (\bibinfo{year}{2002}): \emph{\bibinfo{title}{{Logical foundations of
  CafeOBJ.}}}
\newblock {\sl \bibinfo{journal}{Theor. Comput. Sci.}}
  \bibinfo{volume}{285}(\bibinfo{number}{2}), pp. \bibinfo{pages}{289--318},
  \doi{10.1016/S0304-3975(01)00361-9}.
\bibitemend

\bibitemdeclare{book}{livrodiaconescu}
\bibitem[Dia08]{livrodiaconescu}
\bibinfo{author}{R.~Diaconescu} (\bibinfo{year}{2008}):
  \emph{\bibinfo{title}{Institution-independent Model Theory}}.
\newblock \bibinfo{publisher}{Birkh\"{a}user Basel},
  \doi{10.1007/978-3-7643-8708-2\_2}.
\bibitemend

\bibitemdeclare{article}{traducoesconservativas}
\bibitem[FD01]{traducoesconservativas}
\bibinfo{author}{H.~A. Feitosa} \& \bibinfo{author}{I.~M.~L. D'Ottaviano}
  (\bibinfo{year}{2001}): \emph{\bibinfo{title}{Conservative translations}}.
\newblock {\sl \bibinfo{journal}{Ann. Pure Appl. Logic}}
  \bibinfo{volume}{108}(\bibinfo{number}{1-3}), pp. \bibinfo{pages}{205--227},
  \doi{10.1016/S0168-0072(00)00046-4}.
\bibitemend

\bibitemdeclare{inproceedings}{FM93}
\bibitem[FM93]{FM93}
\bibinfo{author}{J.~Fiadeiro} \& \bibinfo{author}{T.~S.~E. Maibaum}
  (\bibinfo{year}{1993}): \emph{\bibinfo{title}{Generalising Interpretations
  between Theories in the context of (pi-) Institutions}}.
\newblock In: {\sl \bibinfo{booktitle}{Proceedings of the First Imperial
  College Department of Computing Workshop on Theory and Formal Methods}},
  \bibinfo{publisher}{Springer-Verlag}, \bibinfo{address}{London, UK}, pp.
  \bibinfo{pages}{126--147}.
\newblock \urlprefix\url{http://portal.acm.org/citation.cfm?id=647322.721361}.
\bibitemend

\bibitemdeclare{incollection}{FS88}
\bibitem[FS88]{FS88}
\bibinfo{author}{J.~Fiadeiro} \& \bibinfo{author}{A.~Sernadas}
  (\bibinfo{year}{1988}): \emph{\bibinfo{title}{Structuring Theories on
  Consequence}}.
\newblock In \bibinfo{editor}{D.~Sanella} \& \bibinfo{editor}{A.~Tarlecki},
  editors: {\sl \bibinfo{booktitle}{Recent Trends in Data Type Specification.
  Specification of Abstract Data Types (Papers from the Fifth Workshop on
  Specification of Abstract Datac Types, Gullane, 1987)}}, {\sl
  \bibinfo{series}{Lecture Notes in Computer Science}} \bibinfo{volume}{332},
  \bibinfo{publisher}{Springer-Verlag, Berlin}.
\bibitemend

\bibitemdeclare{article}{instituicoes}
\bibitem[GB92]{instituicoes}
\bibinfo{author}{J.~Goguen} \& \bibinfo{author}{R.~Burstall}
  (\bibinfo{year}{1992}): \emph{\bibinfo{title}{Institutions: abstract model
  theory for specification and programming}}.
\newblock {\sl \bibinfo{journal}{J. ACM}}
  \bibinfo{volume}{39}(\bibinfo{number}{1}), pp. \bibinfo{pages}{95--146},
  \doi{10.1145/147508.147524}.
\bibitemend

\bibitemdeclare{misc}{Hen97}
\bibitem[Hen97]{Hen97}
\bibinfo{author}{R.~Hennicker} (\bibinfo{year}{1997}):
  \emph{\bibinfo{title}{Structural specifications with behavioural operators:
  semantics, proof methods and applications}}.
\newblock \bibinfo{note}{Habilitationsschrift}.
\bibitemend

\bibitemdeclare{article}{Mad08}
\bibitem[Mad08]{Mad08}
\bibinfo{author}{Alexandre Madeira} (\bibinfo{year}{2008}):
  \emph{\bibinfo{title}{Observational Refinement Process}}.
\newblock {\sl \bibinfo{journal}{Electr. Notes Theor. Comput. Sci.}}
  \bibinfo{volume}{214}, pp. \bibinfo{pages}{103--129},
  \doi{10.1016/j.entcs.2008.06.006}.
\bibitemend

\bibitemdeclare{article}{Mar06}
\bibitem[Mar06]{Mar06}
\bibinfo{author}{Manuel~A. Martins} (\bibinfo{year}{2006}):
  \emph{\bibinfo{title}{Behavioral Institutions and Refinements in Generalized
  Hidden Logics}}.
\newblock {\sl \bibinfo{journal}{J. UCS - Journ. of Universal Computer
  Science}} \bibinfo{volume}{12}(\bibinfo{number}{8}), pp.
  \bibinfo{pages}{1020--1049}, \doi{10.3217/jucs-012-08-1020}.
\newblock
  \urlprefix\url{http://www.jucs.org/jucs_12_8/behavioral_institutions_and_ref%
inements}.
\bibitemend

\bibitemdeclare{conference}{meseguer}
\bibitem[Mes89]{meseguer}
\bibinfo{author}{J.~Meseguer} (\bibinfo{year}{1989}):
  \emph{\bibinfo{title}{General Logics}}.
\newblock In \bibinfo{editor}{J.~Bairwise} \& \bibinfo{editor}{H.J.~Keisler
  et~all}, editors: {\sl \bibinfo{booktitle}{Logic Colloquium'87}},
  \bibinfo{volume}{87}, \bibinfo{publisher}{Elsevier}, pp.
  \bibinfo{pages}{275--330}.
\bibitemend

\bibitemdeclare{article}{MHST03}
\bibitem[MHST03]{MHST03}
\bibinfo{author}{T.~Mossakowski}, \bibinfo{author}{A.~Haxthausen},
  \bibinfo{author}{D.~Sannella} \& \bibinfo{author}{A.~Tarlecki}
  (\bibinfo{year}{2003}): \emph{\bibinfo{title}{{CASL}: The Common Algebraic
  Specification Language: Semantics and Proof Theory}}.
\newblock {\sl \bibinfo{journal}{Computing and Informatics}}
  \bibinfo{volume}{22}, pp. \bibinfo{pages}{285--321}, \doi{10.1.1.10.2965}.
\bibitemend

\bibitemdeclare{incollection}{MMB09b}
\bibitem[MMB09a]{MMB09b}
\bibinfo{author}{M.A. Martins}, \bibinfo{author}{A.~Madeira} \&
  \bibinfo{author}{L.S. Barbosa} (\bibinfo{year}{2009}):
  \emph{\bibinfo{title}{Refinement by Interpretation in a General Setting}}.
\newblock In \bibinfo{editor}{E.~Boiten J.~Derrick} \&
  \bibinfo{editor}{S.~Reeves}, editors: {\sl \bibinfo{booktitle}{Proc.
  Refinement Workshop 2009}}, \bibinfo{publisher}{ENTCS, Elsevier}, pp.
  \bibinfo{pages}{105--121}, \doi{10.1016/j.entcs.2009.12.020}.
\bibitemend

\bibitemdeclare{incollection}{MMB09a}
\bibitem[MMB09b]{MMB09a}
\bibinfo{author}{M.A. Martins}, \bibinfo{author}{A.~Madeira} \&
  \bibinfo{author}{L.S. Barbosa} (\bibinfo{year}{2009}):
  \emph{\bibinfo{title}{Refinement via Interpretation}}.
\newblock In: {\sl \bibinfo{booktitle}{7th IEEE International Conf. on Software
  Engineeering and Formal Methods, Hanoi, Vietnam}}, \bibinfo{publisher}{IEEE
  Computer Society Press}, \doi{10.1109\/SEFM.2009.35}.
\bibitemend

\bibitemdeclare{inproceedings}{tacas}
\bibitem[MML07]{tacas}
\bibinfo{author}{T.~Mossakowski}, \bibinfo{author}{C.~Maeder} \&
  \bibinfo{author}{K.~L\"{u}ttich} (\bibinfo{year}{2007}):
  \emph{\bibinfo{title}{The heterogeneous tool set, HETS}}.
\newblock In: {\sl \bibinfo{booktitle}{13th Int. Conf. Tools and algorithms for
  the construction and analysis of systems}}, \bibinfo{series}{TACAS'07},
  \bibinfo{publisher}{Springer-Verlag}, \bibinfo{address}{Berlin, Heidelberg},
  pp. \bibinfo{pages}{519--522}, \doi{10.1.1.67.5472}.
\newblock
  \urlprefix\url{http://portal.acm.org/citation.cfm/id=1763507.1763559}.
\bibitemend

\bibitemdeclare{article}{conditional_prof}
\bibitem[MP07]{conditional_prof}
\bibinfo{author}{M.~A. Martins} \& \bibinfo{author}{D.~Pigozzi}
  (\bibinfo{year}{2007}): \emph{\bibinfo{title}{Behavioural reasoning for
  conditional equations}}.
\newblock {\sl \bibinfo{journal}{Mathematical Structures in Computer Science}}
  \bibinfo{volume}{17}(\bibinfo{number}{5}), pp. \bibinfo{pages}{1075--1113},
  \doi{10.1017/S0960129507006305}.
\bibitemend

\bibitemdeclare{inproceedings}{Maibaum1}
\bibitem[MSV84]{Maibaum1}
\bibinfo{author}{T.~S.~E. Maibaum}, \bibinfo{author}{M.~R. Sadler} \&
  \bibinfo{author}{Paulo A.~S. Veloso} (\bibinfo{year}{1984}):
  \emph{\bibinfo{title}{Logical Specification and Implementation}}.
\newblock In: {\sl \bibinfo{booktitle}{Proceedings of the Fourth Conference on
  Foundations of Software Technology and Theoretical Computer Science}},
  \bibinfo{publisher}{Springer-Verlag}, \bibinfo{address}{London, UK}, pp.
  \bibinfo{pages}{13--30}, \doi{10.1007/3-540-13883-8-62}.
\bibitemend

\bibitemdeclare{inproceedings}{Maibaum2}
\bibitem[MVS85]{Maibaum2}
\bibinfo{author}{T.~S.~E. Maibaum}, \bibinfo{author}{P.~A.~S. Veloso} \&
  \bibinfo{author}{M.~R. Sadler} (\bibinfo{year}{1985}):
  \emph{\bibinfo{title}{A theory of abstract data types for program
  development: bridging the gap?}}
\newblock In: {\sl \bibinfo{booktitle}{Proceedings of the International Joint
  Conference on Theory and Practice of Software Development (TAPSOFT) on Formal
  Methods and Software}}, \bibinfo{publisher}{Springer-Verlag},
  \bibinfo{address}{New York, NY, USA}, pp. \bibinfo{pages}{214--230},
  \doi{10.1007/3-540-15199-0\_14}.
\newblock \urlprefix\url{http://portal.acm.org/citation.cfm/id=22263.22277}.
\bibitemend

\bibitemdeclare{phdthesis}{grigore_thesis}
\bibitem[Ro{\c{s}}00]{grigore_thesis}
\bibinfo{author}{G.~Ro{\c{s}}u} (\bibinfo{year}{2000}):
  \emph{\bibinfo{title}{Hidden Logic}}.
\newblock \bibinfo{type}{Ph.D. thesis}, \bibinfo{school}{University of
  California, San Diego}.
\bibitemend

\bibitemdeclare{article}{ST88}
\bibitem[ST88a]{ST88}
\bibinfo{author}{D.~Sannella} \& \bibinfo{author}{A.~Tarlecki}
  (\bibinfo{year}{1988}): \emph{\bibinfo{title}{Specifications in an arbitrary
  institution}}.
\newblock {\sl \bibinfo{journal}{Inform. and Comput.}} \bibinfo{volume}{76},
  pp. \bibinfo{pages}{165--210}, \doi{10.1.1.144.2669}.
\bibitemend

\bibitemdeclare{article}{ST88x}
\bibitem[ST88b]{ST88x}
\bibinfo{author}{D.~Sannella} \& \bibinfo{author}{A.~Tarlecki}
  (\bibinfo{year}{1988}): \emph{\bibinfo{title}{{Towards Formal Development of
  Programs from Algebraic Specifications: Implementations Revisited}}}.
\newblock {\sl \bibinfo{journal}{Acta Informatica}} (\bibinfo{number}{25}), pp.
  \bibinfo{pages}{233--281}, \doi{10.1.1.17.6346}.
\bibitemend

\bibitemdeclare{inproceedings}{Tar95}
\bibitem[Tar95]{Tar95}
\bibinfo{author}{A.~Tarlecki} (\bibinfo{year}{1995}):
  \emph{\bibinfo{title}{Moving Between Logical Systems}}.
\newblock In \bibinfo{editor}{M.~Haveraaen}, \bibinfo{editor}{O.J. Dahl} \&
  \bibinfo{editor}{O.~Owe}, editors: {\sl \bibinfo{booktitle}{11th Workshop on
  Specification of Abstract Data Types, ADT?95}}, \bibinfo{publisher}{Springer
  Lecture Notes in Computer Science (1130)}, pp. \bibinfo{pages}{478--502},
  \doi{10.1.1.49.9260}.
\bibitemend

\bibitemdeclare{article}{Vo02}
\bibitem[Vou02]{Vo02}
\bibinfo{author}{G.~Voutsadakis} (\bibinfo{year}{2002}):
  \emph{\bibinfo{title}{Categorical Abstract Algebraic Logic: Algebraizable
  Institutions}}.
\newblock {\sl \bibinfo{journal}{Applied Categorical Structures}}
  \bibinfo{volume}{10}, pp. \bibinfo{pages}{531--568},
  \doi{10.1023/A:1020990419514}.
\bibitemend

\bibitemdeclare{article}{Vo03}
\bibitem[Vou03]{Vo03}
\bibinfo{author}{G.~Voutsadakis} (\bibinfo{year}{2003}):
  \emph{\bibinfo{title}{Categorical Abstract Algebraic Logic: Equivalent
  Institutions}}.
\newblock {\sl \bibinfo{journal}{Studia Logica}} \bibinfo{volume}{74}, pp.
  \bibinfo{pages}{275--311}, \doi{10.1023/A:1024682108396}.
\bibitemend

\bibitemdeclare{article}{Vo05}
\bibitem[Vou05]{Vo05}
\bibinfo{author}{G.~Voutsadakis} (\bibinfo{year}{2005}):
  \emph{\bibinfo{title}{Categorical Abstract Algebraic Logic: Models of
  $\pi$-Institutions}}.
\newblock {\sl \bibinfo{journal}{Notre Dame Journal of Formal Logic}}
  \bibinfo{volume}{46}(\bibinfo{number}{4}), pp. \bibinfo{pages}{439--460},
  \doi{10.1023/A:1020990419514}.
\bibitemend

\bibitemdeclare{book}{Woji}
\bibitem[W{\'{o}}j88]{Woji}
\bibinfo{author}{R.~W{\'{o}}jcicki} (\bibinfo{year}{1988}):
  \emph{\bibinfo{title}{{Theory of logical calculi. Basic theory of consequence
  operations.}}}
\newblock \bibinfo{publisher}{{Synthese Library, 199. Dordrecht etc.: Kluwer
  Academic Publishers.}}
\bibitemend

\bibliographyend
\end{thebibliography}
\end{document}